\def\<{\langle}
\def\>{\rangle}
\def\stipp{\bar{p}^{(\delta)}}
\def\stippstar{\bar{p}^{(\delta^*)}}
\def\stippbar{\bar{p}^{(\bar{\delta})}}
\def\stipq{\bar{q}^{(\delta)}}
\def\stipf{\bar{f}^{(\delta)}}
\def\flatp{\underline{p}^{(\delta)}}
\def\flatq{\underline{q}^{(\delta)}}
\def\flatqstar{\underline{q}^{(\delta^*)}}
\def\flatpbar{\underline{q}^{(\bar{\delta})}}
\def\flatf{\underline{f}^{(\delta)}}
\def\flattest{flattest}
\def\stippest{steepest}
\newcommand{\be}{\begin{eqnarray} \begin{aligned}}
\newcommand{\ee}{\end{aligned} \end{eqnarray} }
\newcommand{\benn}{\begin{eqnarray*} \begin{aligned}}
\newcommand{\eenn}{\end{aligned} \end{eqnarray*} }
\newcommand{\ben}{\begin{eqnarray} \begin{aligned}}
\newcommand{\een}{\end{aligned} \end{eqnarray} }
\newcommand{\bc}{\begin{center}}
\newcommand{\ec}{\end{center}}
\newcommand{\e}{\mathrm{e}}
\newcommand{\beq}{\begin{eqnarray} \begin{aligned}}
\newcommand{\eeq}{\end{aligned} \end{eqnarray} }
\newcommand{\bea}{\begin{array}}
\newcommand{\eea}{\end{array}}
\newcommand{\bee}{\begin{enumerate}}
\newcommand{\eee}{\end{enumerate}}
\newcommand{\bei}{\begin{itemize}}
\newcommand{\eei}{\end{itemize}}
\newtheorem{theorem}{Theorem}
\newtheorem{proposition}[theorem]{Proposition}
\newtheorem{lemma}[theorem]{Lemma}
\newtheorem{corollary}[theorem]{Corollary}
\def\01{\{0,1\}}
\def\<{\langle}
\def\>{\rangle}
\def\ep{\epsilon}
\newtheorem*{rep@theorem}{\rep@title}
\newcommand{\newreptheorem}[2]{%
\newenvironment{rep#1}[1]{%
 \def\rep@title{#2 \ref{##1} (restatement)}%
 \begin{rep@theorem}}%
 {\end{rep@theorem}}}
\newcommand{\R}{\mathbb{R}}
\begin{document}
\title{Approximate Majorization}
\author{Micha\l\ \surname{Horodecki}}
\affiliation{\small Institute of Theoretical Physics and Astrophysics, National Quantum Information Centre, Faculty of Mathematics, Physics and Informatics, Univeristy of Gda{\'n}sk, Wita Stwosza 57, 80-308 Gda{\'n}sk, Poland}
\author{Jonathan \surname{Oppenheim}}
\affiliation{University College of London, Department of Physics \& Astronomy, London, WC1E 6BT and London Interdisciplinary Network for Quantum Science}                        
\author{Carlo \surname{Sparaciari}}
\affiliation{University College of London, Department of Physics \& Astronomy, London, WC1E 6BT}
\begin{abstract}
Although an input distribution may not majorize a target distribution, it may majorize a distribution which
is close to the target. Here we introduce a notion of approximate majorization. For any distribution, and
given a distance $\delta$, we find the approximate distributions which majorize (are majorized by) all other
distributions within the distance $\delta$.  We call these the steepest and flattest approximation. This enables
one to compute how close one can get to a given target distribution under a process governed by majorization.
We show that the flattest and steepest approximations preserve ordering under majorization. Furthermore,
we give a notion of majorization distance. This has applications ranging from thermodynamics, entanglement
theory, and economics.
\end{abstract}
\date{\today}
\maketitle
The theory of majorization~\cite{book-majorization, nielson-majorisation} has important applications in topics
as diverse as matrix theory, geometry, combinatorics, statistics, thermodynamics, entanglement theory and
economics. It defines a partial ordering over vectors of real numbers, as follow. For two vectors $a, b \in \R^k$,
one define $a^{\downarrow}$, $b^{\downarrow}$ as the same vectors whose elements are non-increasingly
ordered. Then, one says that \emph{$a$ weakly majorizes $b$ from below},
$a \succ_{w} b$, iff
\be
\sum_{i=1}^l a^{\downarrow}_i \geq \sum_{i=1}^l b^{\downarrow}_i, \quad  \forall \, l = 1 , \ldots , k.
\ee
When the two vectors have the same norm, one says that $a$ \emph{majorizes} $b$, $a \succ b$.
Hardy, Littlewood, and Polya~\cite{Hardy-Littlewood-Polya} showed that $a \succ b$ iff $b=Da$,
where $D$ is a doubly-stochastic matrix (alternatively a probabilistic mixture of permutations).
\par
Here, we are interested in normalised vectors $p$ and $q$ which represent an input and output
probability distribution of $k$ elements. In many situations, processes on these systems are represented
by doubly-stochastic matrices. Majorization then determines whether there exists a process which
takes $p$ to $q$, but in many situations, we are more interested in whether a process gets us close
to the target distribution. In the context of single-shot information theory, and of certain entropic functions,
finding an approximation to the target distribution which minimises resources has been termed
{\it smoothing}~\cite{Renner-PhD}. Here we are interested in a different notion of smoothing which can be
applied to finding the optimal approximation of the output or input state for the purposes of majorization.
The present work, whose initial draft circulated in 2013, has recently found application in the context
of thermodynamics~\cite{deMeer-MSc,deMeer2017}, and the smoothing we use has
been independently rediscovered in the context of convex optimisation~\cite{nila-2017}. 
\par
Here, we investigate how majorization behave under smoothing. 
We first introduce two smoothed versions of a given probability distribution, namely, the
steepest and flattest $\delta$-approximation of this distribution. Then, we show that the
steepest approximation majorizes any probability distribution whose distance from the original
distribution is less or equal than $\delta$, while the flattest approximation is majorized by all
these distributions. We also show that smoothing preserves monotonicity under majorization,
for both the steepest and flattest approximation. Finally, we apply our findings to the analysis
of the smooth version of Schur concave/convex functions.
As we anticipated, our main tool consists in two specific approximations of a given probability
distribution $p$, each of them $\delta$-close to the original distribution. These approximations
are (i) the {\it \flattest\ $\delta$-approximation} of $p$, and (ii) the {\it \stippest\ $\delta$-approximation}
of $p$. In the following we will assume the elements of the probability distribution $p$ to be
non-increasingly ordered.
\par
The \stippest\ $\delta$-approximation of $p$, which we denote by $\stipp$, is constructed as follows.
If $\| p - \e_1 \| \leq \delta$, where $\e_1$ is the distribution whose first element is equal to 1, then we take
$\stipp = \e_1$. Otherwise, we maximally increase the largest element of $p$, and we cut the tail. More
precisely, we first add $\frac{\delta}{2}$ to the largest element of $p$ (which is possible, since
$\| p - \e_1 \| > \delta$). This procedure returns a non-normalized distribution which we will denote
by $r$, whose elements are defined as
\be
\label{eq:nnormr}
r_i=
\left\{
\bea{lll}
p_1+\frac{\delta}{2} &\text{for}& i=1, \\
p_i & \text{for}&i\not=1. \\
\eea
\right.
\ee
Then we cut $\frac{\delta}{2}$ from the tail of this distribution. Formally, we take the integer $l^* \in
\left\{ 1, \ldots, k \right\}$ such that 
\be
\label{eq:deflstar}
\sum_{i=1}^{l^*} r_i \leq 1 \quad \text{and} \quad \sum_{i=1}^{l^*+1} r_i > 1,
\ee
and we define the \stippest\ $\delta$-approximation of $p$ as
\be
\stipp_i=
\left\{
\bea{lll}
r_i & \text{for} & i< l^*+1, \\
1 - x & \text{for} & i=l^*+1, \\
0 & \text{for} & i>l^*+1. \\
\eea
\right.
\ee
where $x = \sum_{i=1}^{l^*} r_i$. In Fig.~\ref{fig:flatstip}, the process of steepening a probability
distribution $p$ is shown, together with the resulting \stippest\ $\delta$-approximation $\stipp$.
\par
The \flattest\ $\delta$-approximation of $p$, denoted by $\flatp$, is constructed in the following way.
If $\|p-\eta\|\leq \delta$, where $\eta $ is the uniform distribution, then we define $\flatp=\eta$. 
Otherwise, we proceed as follows. For a given $x, y \in \left[0,1\right]$, we define the following subsets,
\begin{align}
\label{eq:setI}
I &= \left\{ i \in \left\{ 1, \ldots, k \right\} | \ p_i \geq x \right\}, \\
\label{eq:setJ}
J &= \left\{ i \in \left\{ 1, \ldots, k \right\} | \ p_i \leq y \right\},
\end{align}
and we introduce the functions
\begin{align}
\label{eq:epsx}
\ep(x) &= \sum_{i \in I} \, ( p_i - x ), \\
\label{eq:gammax}
\gamma(y) &= \sum_{i \in J} \, ( y - p_i ).
\end{align}
Then, we choose $x^* \in [0,1]$ such that $\ep(x^*)=\frac{\delta}{2}$, and $y* \in [0,1]$ such
that $\gamma(y^*) = \frac{\delta}{2}$. It is worth noting that, since $\| p - \eta \| > \delta$, both
$x^*$ and $y^*$ exist and are unique, and moreover $x^* > y^*$.
We can now define the \flattest\ $\delta$-approximation of $p$ as
\be
\label{eq:flatp}
\flatp_i=
\left\{
\bea{lll}
 x^* & \text{for} & i\in I,\\
 y^* & \text{for} & i\in J,\\
p_i &\text{else}.& \\
\eea
\right.
\ee
In Fig.~\ref{fig:flatstip}, the process of flattening a probability distribution $p$ is shown,
together with the resulting \flattest\ $\delta$-approximation $\flatp$.
\par
{\bf Remark.} Let us note that the above constructions preserve the order of the elements, i.e., if the
probability distribution $p$ is non-increasingly ordered, then the same applies to both $\stipp$
and $\flatp$.
\begin{figure}[ht!]
  \center
  \includegraphics[width=0.185\textwidth]{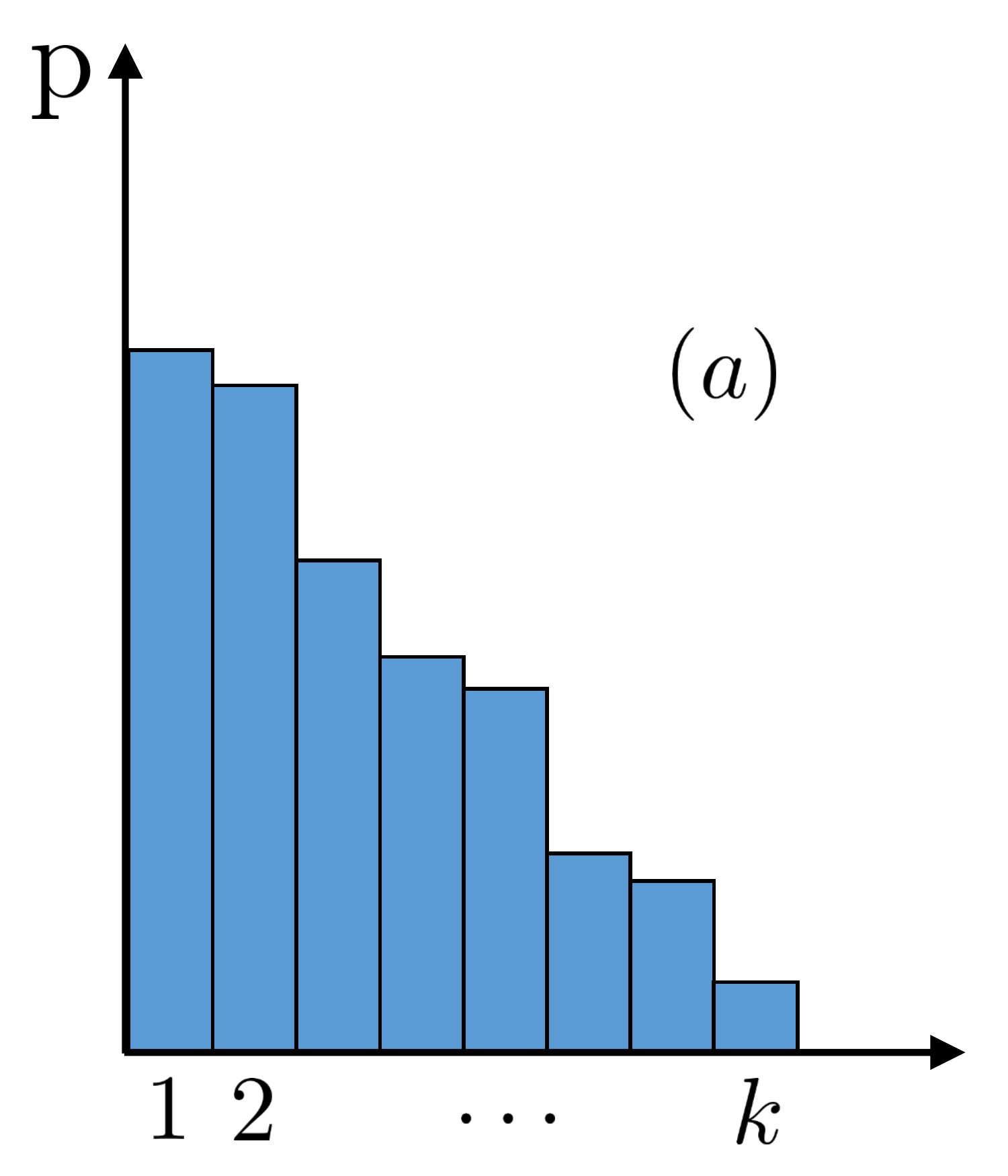}
  \includegraphics[width=0.185\textwidth]{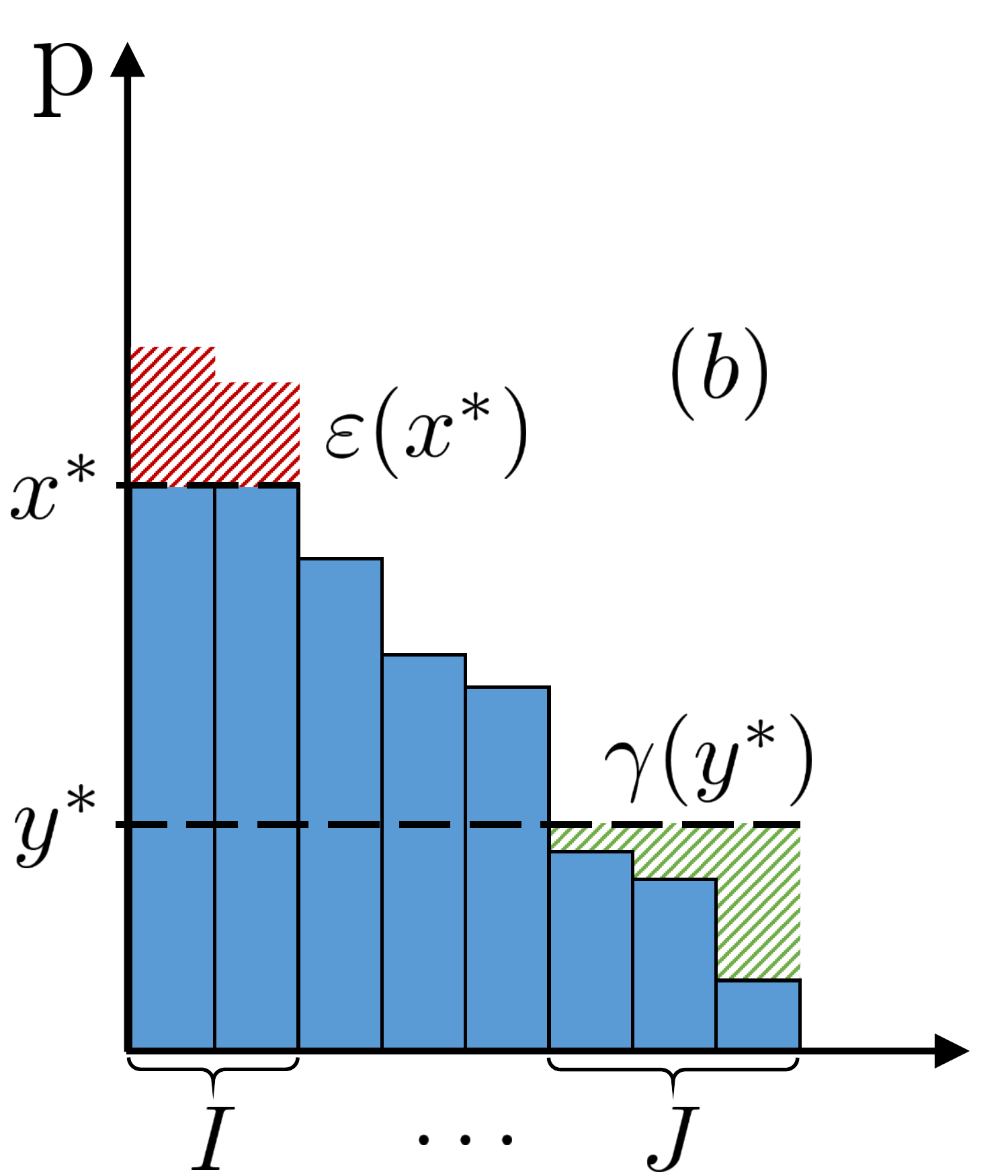}
  \includegraphics[width=0.2\textwidth]{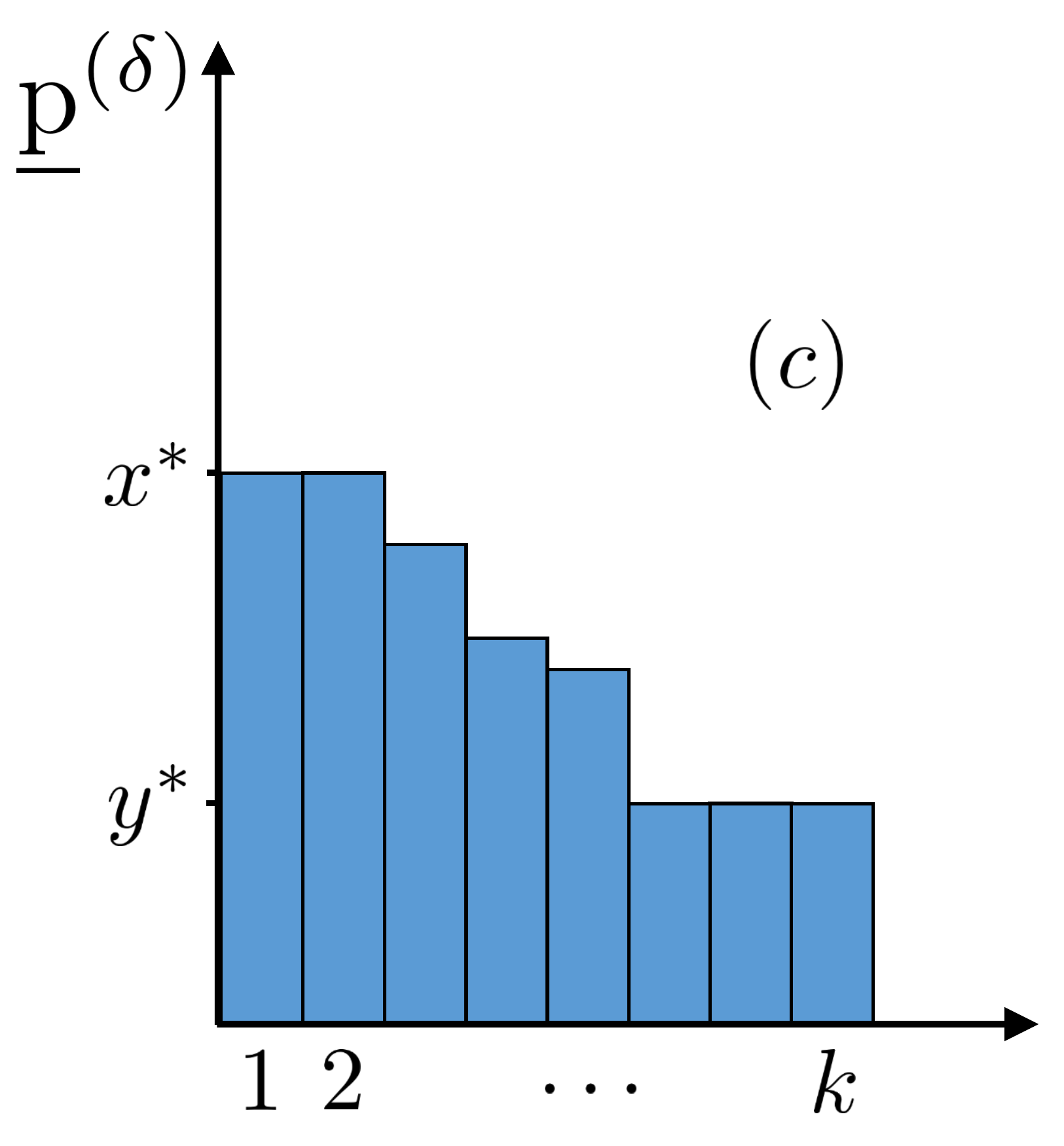}
  \includegraphics[width=0.185\textwidth]{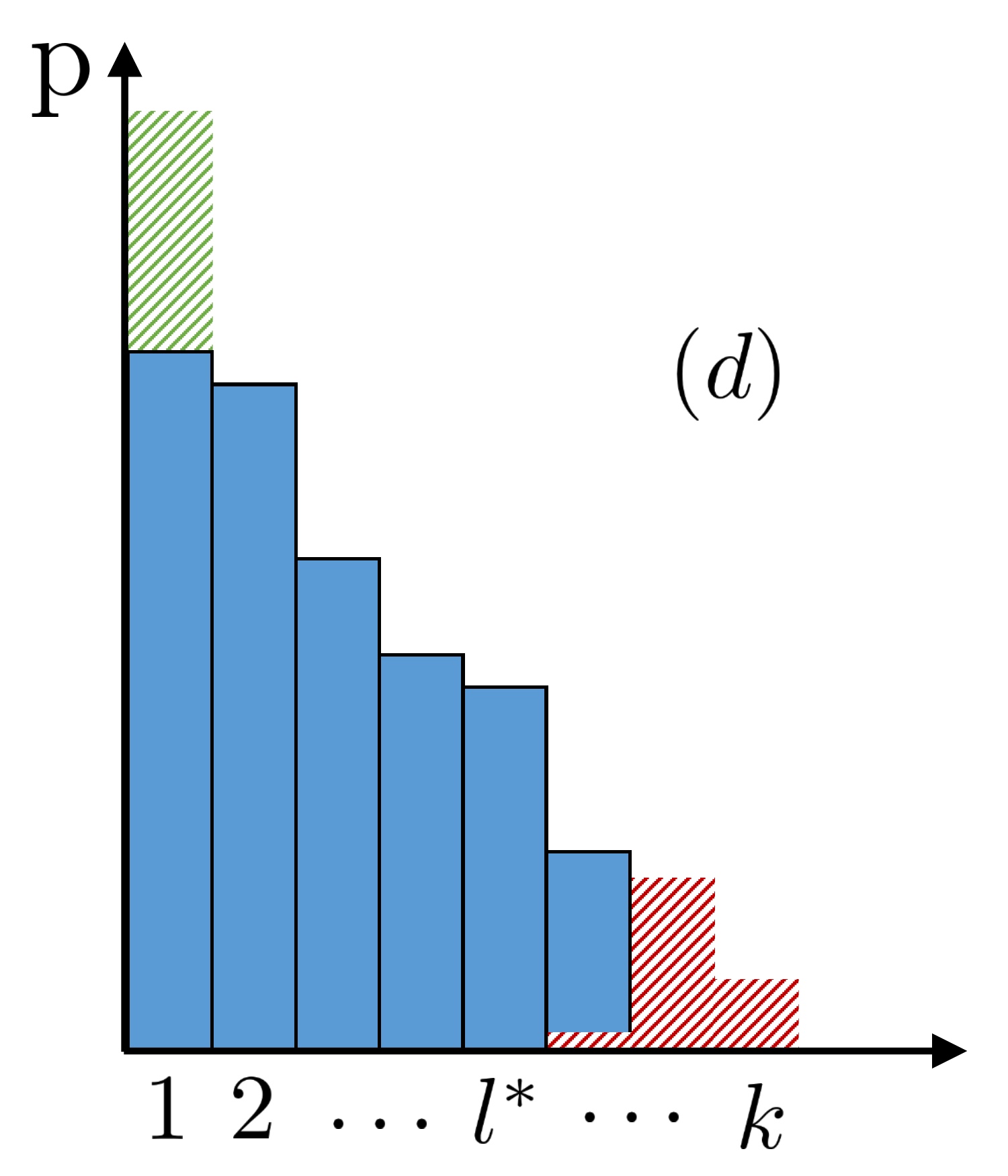}
  \includegraphics[width=0.2\textwidth]{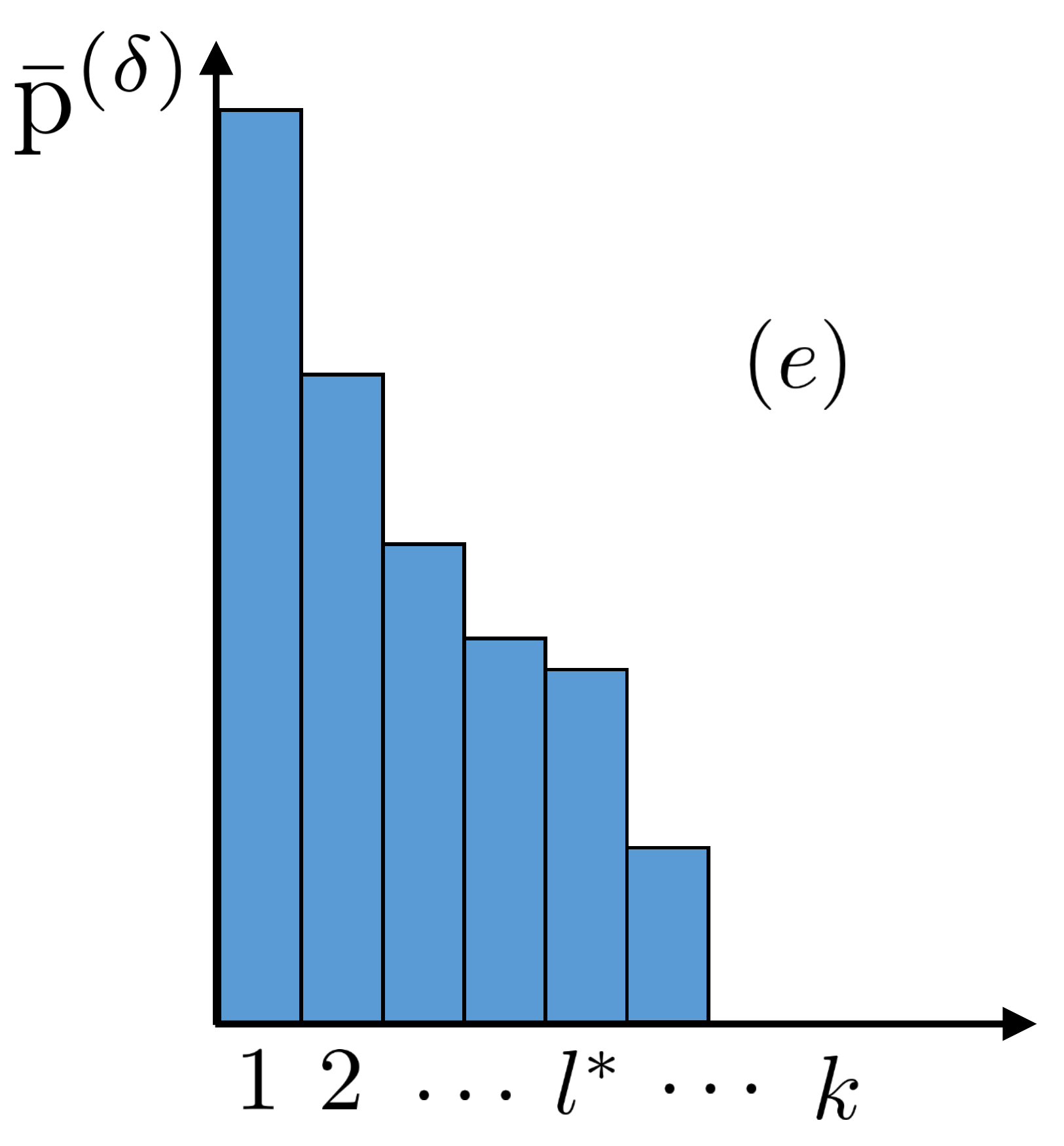}
  \caption{The procedure of flattening and steepening the probability distribution $p$.
  The added portion is green, while the removed one is red. The two portions have the same
  area equal to $\frac{\delta}{2}$. (a) The original distribution $p$. (b) The procedure of flattening
  the probability distribution $p$. (c) The flattest $\delta$-approximation of $p$, $\flatp$. (d) The
  procedure of steepening the distribution $p$. (e) The steepest $\delta$-approximation of $p$,
  $\stipp$.}
  \label{fig:flatstip}
\end{figure}
\par
The following lemma, concerning the majorization properties of $\stipp$ and $\flatp$, singles
out these two distributions among all the other distributions which are $\delta$-close to $p$.
\def\flatstiplemma
{
For a given probability distribution $p$ of $k$ elements, the distributions $\stipp$ and $\flatp$ are 
extremal $\delta$-approximations of $p$ in the sense of majorization order,
\bei
\item[(i)] The \stippest\ $\delta$-approximation $\stipp$ of $p$
majorizes any arbitrary distribution $p'$ satisfying $\|p-p'\|\leq \delta$,
\be
\stipp\succ p'.
\ee 
\item[(ii)] The \flattest\ $\delta$-approximation $\flatp$ of $p$ is majorized by every arbitrary distribution
$p'$ satisfying $\|p-p'\|\leq \delta$,
\be
p' \succ \flatp.
\ee
\eei
}
\begin{lemma}
\flatstiplemma
\label{lem:flatstiplemma}
\end{lemma}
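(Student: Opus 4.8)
The plan is to reduce both statements to elementary inequalities between the partial sums of the non-increasingly ordered vectors, via the Hardy--Littlewood--Polya criterion. Write $\|v\|=\sum_i|v_i|$, and record the basic identity that, for normalised $p,p'$, one has $\sum_{i:\,p'_i>p_i}(p'_i-p_i)=\sum_{i:\,p_i>p'_i}(p_i-p'_i)=\frac{1}{2}\|p-p'\|$, which follows from $\sum_i(p_i-p'_i)=0$. Since $\stipp$ and $\flatp$ are non-increasingly ordered (the Remark), it suffices to compare their partial sums directly with those of $(p')^{\downarrow}$, and equality at $l=k$ is automatic from normalisation. The trivial cases $\|p-\e_1\|\le\delta$ and $\|p-\eta\|\le\delta$ need no argument, since $\e_1$ majorizes, and $\eta$ is majorized by, every distribution.

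For part (i), fix $l$ and let $S$ with $|S|=l$ carry the $l$ largest entries of $p'$. Then
\[
\sum_{i=1}^l (p')^{\downarrow}_i = \sum_{i\in S} p'_i \le \sum_{i\in S} p_i + \sum_{i:\, p'_i > p_i} (p'_i - p_i) \le \sum_{i=1}^l p_i + \frac{\delta}{2},
\]
where the last step uses that $p$ is ordered together with the identity above. Now split on $l$: if $l\le l^*$, the right-hand side is exactly $\sum_{i=1}^l \stipp_i$ by the construction of $\stipp$; if $l\ge l^*+1$, then $\sum_{i=1}^l \stipp_i = 1 \ge \sum_{i=1}^l (p')^{\downarrow}_i$ trivially. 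Hence $\stipp\succ p'$.

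For part (ii) I first note the companion lower bound, valid for all $l$: taking $S=\{1,\dots,l\}$,
\[
\sum_{i=1}^l (p')^{\downarrow}_i \ge \sum_{i=1}^l p'_i \ge \sum_{i=1}^l p_i - \sum_{i:\, p_i > p'_i} (p_i - p'_i) \ge \sum_{i=1}^l p_i - \frac{\delta}{2}.
\]
Let $I=\{1,\dots,a\}$ and $J=\{k-b+1,\dots,k\}$ be the top and bottom blocks in the definition of $\flatp$; they are nonempty and disjoint because $\ep(x^*)=\gamma(y^*)=\delta/2>0$ and $x^*>y^*$, and by construction $a\,x^*=\sum_{i=1}^a p_i-\frac{\delta}{2}$ and $b\,y^*=\sum_{i=k-b+1}^k p_i+\frac{\delta}{2}$. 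I then verify $\sum_{i=1}^l (p')^{\downarrow}_i \ge \sum_{i=1}^l \flatp_i$ in three ranges. For $a<l\le k-b$, where $\sum_{i=1}^l\flatp_i=\sum_{i=1}^l p_i-\frac{\delta}{2}$, this is exactly the displayed bound. For $l\le a$, where $\sum_{i=1}^l\flatp_i=l\,x^*$, I use that the running average of the non-increasing sequence $(p')^{\downarrow}$ is non-increasing in $l$, together with $\sum_{i=1}^a (p')^{\downarrow}_i \ge \sum_{i=1}^a p_i - \frac{\delta}{2} = a\,x^*$, to get $\frac{1}{l}\sum_{i=1}^l (p')^{\downarrow}_i \ge \frac{1}{a}\sum_{i=1}^a (p')^{\downarrow}_i \ge x^*$. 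The range $l>k-b$ is the mirror image: rewriting the desired inequality as a bound on the $m=k-l$ smallest entries of $p'$, one first obtains $\sum_{i=k-b+1}^k (p')^{\downarrow}_i \le \sum_{i=k-b+1}^k p_i + \frac{\delta}{2} = b\,y^*$ (again via the identity) and then uses that the average of the $m$ smallest entries of $p'$ is at most that of its $b$ smallest entries. This gives $p'\succ\flatp$.

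I do not expect a serious obstacle; the points that need care are (a) the factor $\frac{1}{2}$ in the $\ell_1$ estimates, (b) checking that the trivial sub-cases coincide exactly with the configurations in which $\stipp$ or $\flatp$ is not given by the generic formula, so that $l^*,x^*,y^*,I,J$ are well defined, and (c) the monotone running-average step used in the two extreme ranges of part (ii) — precisely where a naive entrywise comparison of $p'$ with $\flatp$ would break down.
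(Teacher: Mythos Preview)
Your proof is correct and follows essentially the same route as the paper's: the same $\ell_1$ half-mass identity to get $\sum_{i=1}^l (p')^{\downarrow}_i \le \sum_{i=1}^l p_i + \tfrac{\delta}{2}$ (and its companion lower bound), the same case split at $l^*$ for part~(i), and the same three-range analysis ($I$, middle, $J$) for part~(ii). Your running-average monotonicity step is exactly the paper's ``any distribution of no smaller norm majorizes the flat one'' observation and its tail analogue, just phrased differently.
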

Thus, the above lemma shows that the \stippest\ and \flattest\ $\delta$-approximations of $p$
are extremal points (with respect to the majorization order) of the set of all probability distributions
which are $\delta$-close to $p$. We can also consider the Lorenz curves associated with the
distributions $p$, $\stipp$, and $\flatp$, see Fig.~\ref{fig:flatstip_lorenz}. The Lorenz curve of 
$\stipp$ is obtained by shifting all elbows upward by $\frac{\delta}{2}$, until we reach the
normalisation threshold equal to 1. Then, the curve is concluded by an horizontal segment.
Formally, the Lorenz curve of the \stippest\ $\delta$-approximation is defined as
\be
L_{\stipp}(l) = \sum_{i=1}^l \stipp_i =
\left\{
\bea{lll}
 \sum_{i=1}^l p_i + \frac{\delta}{2} & \text{for} & l \leq l^*,\\
 1 & \text{for} & l > l^*.\\
\eea
\right.
\ee
The Lorenz curve of $\flatp$ begins as a straight segment connecting the origin of the axes
with the point $\left( l_I , \sum_{i \in I} p_i - \frac{\delta}{2} \right)$, where $l_I$ is the maximum
index of the set $I$. The final part of the curve is also a straight segment, connecting
the point $\left( l_J - 1 , 1 - \left( \sum_{i \in J} p_i + \frac{\delta}{2} \right) \right)$, where
$l_J$ is the minimum index of the set $J$, with the point $\left( k, 1 \right)$. Finally, the
other elbows of the curve are simply shifted downward by $\frac{\delta}{2}$. More formally,
the Lorenz curve of the \flattest\ $\delta$-approximation is
\be
L_{\flatp}(l) = \sum_{i=1}^l \flatp_i =
\left\{
\bea{lll}
 l \, x^* & \text{for} & l \in I,\\
 \sum_{i=1}^l p_i - \frac{\delta}{2} & \text{for} & l \not\in I \cup J,\\
 1 - \left(k - l\right) y^* & \text{for} & l \in J.\\
\eea
\right.
\ee
Then, from Lemma~\ref{lem:flatstiplemma} it follows that the Lorenz curve of any probability distribution $p'$
(such that $\| p - p' \| \leq \delta$) entirely lies above the Lorenz curve of $\flatp$, and below the
Lorenz curve of $\stipp$.
\begin{figure}[ht!]
  \center
  \includegraphics[width=0.3\textwidth]{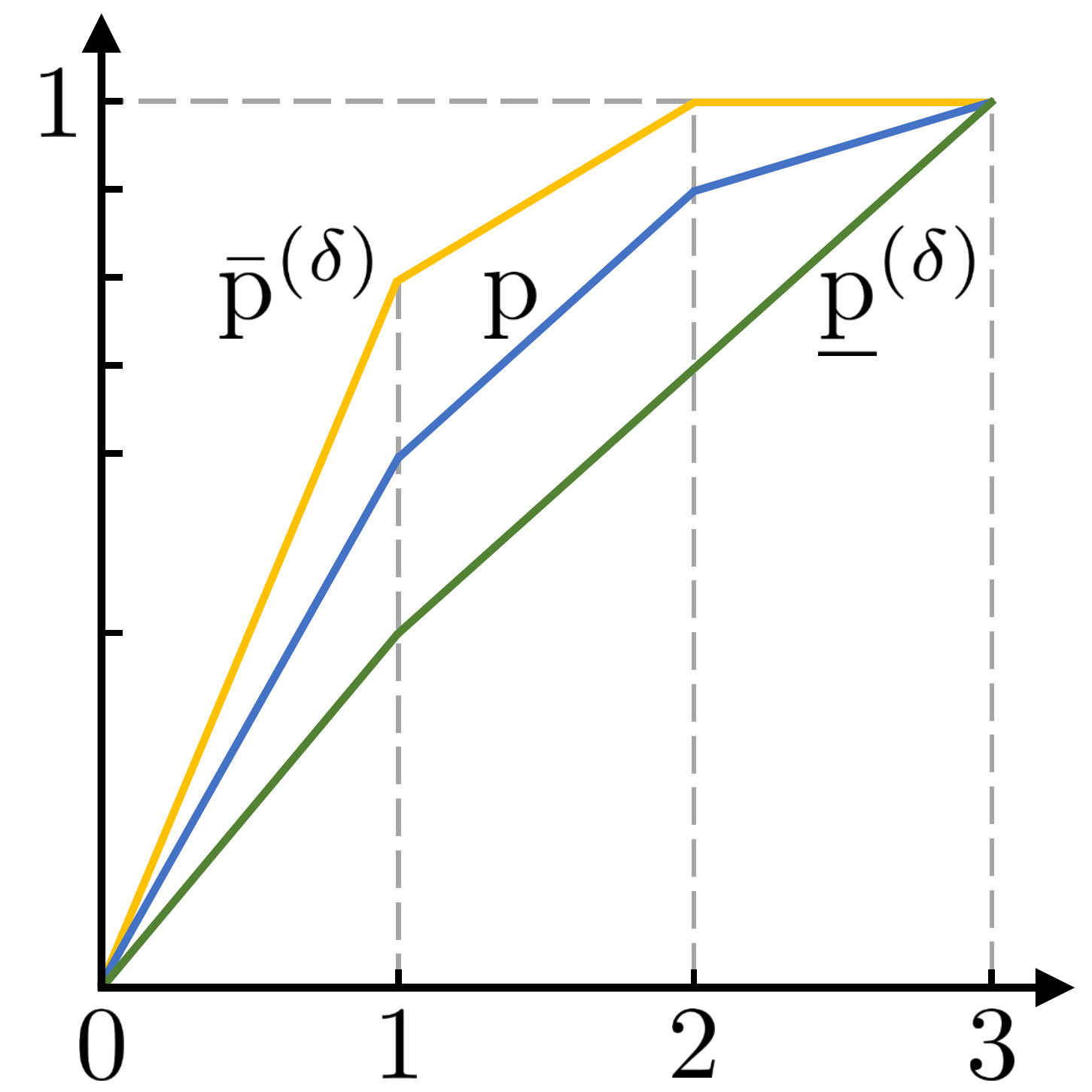}
  \caption{The Lorenz curve of the probability distribution
  $p = \left( 0.6, 0.3, 0.1 \right)$ is shown in blue. For $\delta = 0.4$, we find
  that the \stippest\ $\delta$-approximation of $p$ is $\stipp = \left( 0.8, 0.2, 0 \right)$,
  and its Lorenz curve is shown in orange. The \flattest\ $\delta$-approximation of $p$
  is $\flatp = \left( 0.4, 0.3, 0.3 \right)$, and its Lorenz curve is shown in green.}
  \label{fig:flatstip_lorenz}
\end{figure}
\par
An additional property of the processes of steepening and flattening a probability
distribution consists in the fact that they preserve the majorization order.
\def\preservingorderlemma
{
Given two probability distributions of $k$ elements, $p$ and $q$, which satisfy $p \succ q$,
we have 
\begin{align}
\flatp \succ \flatq, \\
\stipp \succ \stipq.
\end{align}
}
\begin{lemma}
\preservingorderlemma
\label{lem:presorder}
\end{lemma}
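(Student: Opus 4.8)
I would treat the two claims separately: the steepest case drops straight out of the Lorenz-curve formula derived above, whereas the flattest case needs an indirect argument.

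\emph{Steepest.} Writing $L_p(l)=\sum_{i=1}^{l}p_i$ (legitimate because $p$ is non-increasingly ordered), the Lorenz curve computed above reads $L_{\stipp}(l)=\min\!\big(L_p(l)+\tfrac{\delta}{2},\,1\big)$ for every $l$; this formula is uniform in $l$ and also covers the degenerate case $\|p-\e_1\|\le\delta$, where it collapses to the constant $1$. Since $t\mapsto\min(t+\tfrac{\delta}{2},1)$ is non-decreasing and $p\succ q$ gives $L_p(l)\ge L_q(l)$ for all $l$ (with $L_p(k)=L_q(k)=1$), we get $L_{\stipp}(l)\ge L_{\stipq}(l)$ for all $l$, with equality at $l=k$ since both approximations are normalised. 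As $\stipp$ and $\stipq$ are themselves non-increasingly ordered (the Remark), their Lorenz curves are exactly the partial sums, so this is precisely $\stipp\succ\stipq$.

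\emph{Flattest.} A termwise comparison of $L_{\flatp}$ and $L_{\flatq}$ is unpleasant, because the curve of $\flatp$ is piecewise and its cut levels $x^*,y^*$ and index sets $I,J$ depend on $p$ in a way that is awkward to line up against those for $q$. Instead I would argue via Hardy--Littlewood--Polya together with the extremality of Lemma~\ref{lem:flatstiplemma}. Since $p\succ q$, pick a doubly-stochastic $D$ with $q=Dp$ and set $q':=D\flatp$. Then $q'$ is a probability vector with $\flatp\succ q'$ (the easy direction of HLP), and $\|q-q'\|=\|D(p-\flatp)\|\le\|p-\flatp\|\le\delta$, using that a doubly-stochastic matrix is a contraction for the ($\ell_1$) distance $\|\cdot\|$ and that $\|p-\flatp\|\le\delta$ always — it equals $\delta$ generically and is $\le\delta$ in the degenerate case $\flatp=\eta$. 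Hence $q'$ is one of the distributions $\delta$-close to $q$, so Lemma~\ref{lem:flatstiplemma}(ii) applied to the pair $(q,q')$ gives $q'\succ\flatq$; chaining this with $\flatp\succ q'$ and transitivity of majorization yields $\flatp\succ\flatq$.

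The only real obstacle is the flattest claim, and specifically resisting a direct attack on the Lorenz curve of $\flatp$. The route above sidesteps it by transporting $\flatp$ through the doubly-stochastic matrix that witnesses $p\succ q$ and then invoking the extremality of $\flatq$. An alternative, staying inside the Lorenz picture, is to observe that $L_{\flatp}$ is exactly the least concave majorant of the function equal to $L_p(l)-\tfrac{\delta}{2}$ for $0<l<k$ and pinned to $0$ at $l=0$ and to $1$ at $l=k$ — which is what the description "line of slope $x^*$ from the origin, then the shifted interior elbows, then a line of slope $y^*$ to $(k,1)$" amounts to; granting this, $p\succ q$ makes the $p$-function dominate the $q$-function pointwise, and monotonicity of the least-concave-majorant operation gives $L_{\flatp}\ge L_{\flatq}$, hence $\flatp\succ\flatq$. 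Verifying that characterisation is elementary convexity once $x^*$ and $y^*$ are identified with the heights of the two relevant elbows of the shifted curve, but it is exactly the bookkeeping the first route avoids.
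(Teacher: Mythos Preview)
Your steepest argument is the paper's own proof, just packaged as the single formula $L_{\stipp}(l)=\min\!\big(L_p(l)+\tfrac{\delta}{2},1\big)$ together with monotonicity of $t\mapsto\min(t+\tfrac{\delta}{2},1)$; the paper instead writes out the two cases $l\le l^*$ and $l>l^*$ explicitly.

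Your flattest argument is correct and genuinely different from the paper's. The paper proceeds by a direct Lorenz-curve comparison: it introduces an auxiliary lemma showing that the cut levels satisfy $x^*(p)\ge x^*(q)$ and $y^*(p)\le y^*(q)$, and then splits the index range into five intervals determined by $l_I,l_I',l_J,l_J'$, checking $\sum_{i\le l}\flatp_i\ge\sum_{i\le l}\flatq_i$ (or the tail version) on each piece. Your route bypasses all of this by transporting $\flatp$ through a doubly-stochastic witness $D$ of $p\succ q$, using that $D$ is an $\ell_1$-contraction to land $q'=D\flatp$ inside the $\delta$-ball around $q$, and then invoking the extremality of $\flatq$ from Lemma~\ref{lem:flatstiplemma}(ii). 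This is shorter, avoids the auxiliary lemma and the case analysis, and makes transparent that order-preservation under flattening is really a corollary of extremality plus Hardy--Littlewood--Polya; the paper's approach, in exchange, yields the explicit comparison $x^*(p)\ge x^*(q)$, $y^*(p)\le y^*(q)$ of the cut levels, which may be of independent use. Your closing remark that $L_{\flatp}$ is the least concave majorant of the shifted curve is also correct and gives a third route, but as you note it requires the bookkeeping you set out to avoid.
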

The above lemmas have implications for the smooth versions of 
Schur concave/convex functions. For any function $f$ from the space of
probability distribution to $\R$, let us define the following two smooth versions,
\begin{align}
\stipf(p)=\max_{\|q-p\|\leq \delta}f(q), \\
\flatf(p)=\min_{\|q-p\|\leq \delta}f(q).
\end{align}
Then, we have the following proposition, which allows for explicitly computing the 
smoothed entropies for a given value of $\delta$~\cite{deMeer-MSc,deMeer2017}.
\begin{proposition}
\label{prop:smoothed_schur}
Let $f$ be Schur-convex function, and $p$ a probability distribution of $k$ elements. Then 
\be
\stipf(p) = f(\stipp),\quad \flatf(p)=f(\flatp).
\label{eq:f-smooth-attain1}
\ee
If $f$ is a Schur-concave function, then
\be
\stipf(p) = f(\flatp),\quad \flatf(p)=f(\stipp).
\label{eq:f-smooth-attain2}
\ee
\end{proposition}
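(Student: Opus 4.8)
The plan is to deduce the proposition almost immediately from Lemma~\ref{lem:flatstiplemma} together with the defining property of Schur-convex/Schur-concave functions. Recall that a Schur-convex function is, by definition, monotone under majorization, $a \succ b \Rightarrow f(a) \ge f(b)$, and a Schur-concave function reverses this, $a \succ b \Rightarrow f(a) \le f(b)$. Since the optimizations defining $\stipf(p)$ and $\flatf(p)$ range over the ball $\{q : \|q-p\|\le\delta\}$, the argument will have two ingredients: first, that $\stipp$ and $\flatp$ are themselves points of this ball, hence admissible competitors; second, that every other competitor $q$ in the ball is sandwiched by $\stipp$ and $\flatp$ in the majorization order, which is exactly the content of Lemma~\ref{lem:flatstiplemma}.

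\textbf{Step 1: admissibility of $\stipp$ and $\flatp$.} I would check $\|\stipp - p\| \le \delta$ and $\|\flatp - p\| \le \delta$ directly from the constructions. For $\stipp$ in the non-degenerate case one adds $\tfrac{\delta}{2}$ to the top entry and removes total mass $\tfrac{\delta}{2}$ from the tail, so the $\ell_1$ distance is exactly $\delta$; in the degenerate branch $\stipp=\e_1$, which is within $\delta$ by the very hypothesis triggering that branch. The same bookkeeping applies to $\flatp$, where mass $\tfrac{\delta}{2}$ is pushed down from the large entries onto the small ones, with the degenerate branch $\flatp=\eta$ within $\delta$ by hypothesis. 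Hence both are valid arguments of the max and the min.

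\textbf{Step 2: the inequality chains.} Suppose $f$ is Schur-convex. Admissibility of $\stipp$ gives $\stipf(p) \ge f(\stipp)$. Conversely, for every admissible $q$ we have $\stipp \succ q$ by Lemma~\ref{lem:flatstiplemma}(i), hence $f(\stipp) \ge f(q)$; taking the maximum over $q$ yields $f(\stipp) \ge \stipf(p)$, so $\stipf(p)=f(\stipp)$. Symmetrically, admissibility of $\flatp$ gives $\flatf(p) \le f(\flatp)$, while $q \succ \flatp$ from Lemma~\ref{lem:flatstiplemma}(ii) gives $f(q)\ge f(\flatp)$ for all admissible $q$, so $\flatf(p)\ge f(\flatp)$, whence $\flatf(p)=f(\flatp)$; this proves \eqref{eq:f-smooth-attain1}. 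If instead $f$ is Schur-concave, the same two chains run with every majorization-induced inequality reversed: $\stipp\succ q$ now gives $f(\stipp)\le f(q)$ and $q\succ\flatp$ gives $f(q)\le f(\flatp)$, so the maximum over the ball is $f(\flatp)$ and the minimum is $f(\stipp)$, i.e.\ $\stipf(p)=f(\flatp)$ and $\flatf(p)=f(\stipp)$, which is \eqref{eq:f-smooth-attain2}.

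I do not expect a genuine obstacle here: the substantive work is carried entirely by Lemma~\ref{lem:flatstiplemma}, which we may assume, and the only points needing a little care are the degenerate cases $\|p-\e_1\|\le\delta$ and $\|p-\eta\|\le\delta$ (there $\stipp=\e_1$ in fact majorizes \emph{every} probability vector and $\flatp=\eta$ is majorized by every probability vector, so the sandwiching is even stronger than required), together with the remark that no continuity or compactness hypothesis on $f$ is needed, since Lemma~\ref{lem:flatstiplemma} hands us an explicit optimizer rather than merely an extremal majorization direction.
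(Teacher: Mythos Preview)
Your proposal is correct and follows essentially the same approach as the paper's own proof: invoke Lemma~\ref{lem:flatstiplemma} to sandwich every competitor $q$ in the $\delta$-ball between $\stipp$ and $\flatp$ in the majorization order, and then use the defining monotonicity of Schur-convex/concave functions to identify the optimizer. Your write-up is in fact more careful than the paper's, since you explicitly verify admissibility of $\stipp$ and $\flatp$ and handle the degenerate branches, points the paper leaves implicit.
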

\begin{proof}
For $f$ being Schur-convex we have that $p\succ q$ implies 
\be
f(p) \geq f(q).
\ee
Thus the function preserves majorization order, hence on the set of $\delta$-approximations of $p$
it is maximal on $\stipp$ and minimal on $\flatp$.  Thus from definition of $\stipf$ and $\flatf$ we obtain 
Eq.~\eqref{eq:f-smooth-attain1}. An analogous argument applies when $f$ is Schur-concave.
\end{proof}
It directly follows from Lemma~\ref{lem:presorder} and Proposition~\ref{prop:smoothed_schur} that
\begin{corollary}
The smoothed versions of a Schur-convex function, $\stipf$ and $\flatf$, are monotonic under
majorization order, i.e., given two probability distributions of $k$ elements, $p$ and $q$, where
$p \succ q$, we have 
\be
\stipf(p) \geq \stipf(p),\quad \flatf(p)\geq \flatf(p).
\ee
\label{cor:monotone}
\end{corollary}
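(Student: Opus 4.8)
The plan is to deduce the corollary directly by composing two facts already in hand: Lemma~\ref{lem:presorder}, which states that the steepening and flattening maps are monotone with respect to majorization, and Proposition~\ref{prop:smoothed_schur}, which evaluates the smoothed function $\stipf$, $\flatf$ at the corresponding extremal $\delta$-approximations. No new analytic estimate is required; the argument is a short chain of implications, and the only point that needs a moment's care is checking that every majorization relation appearing is a genuine (equal-norm) one --- which is automatic here, since both constructions map a normalised distribution of $k$ elements to a normalised distribution of $k$ elements.

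Concretely, I would fix a Schur-convex $f$ and probability distributions with $p\succ q$. Step one: apply Lemma~\ref{lem:presorder} to obtain $\stipp\succ\stipq$ and $\flatp\succ\flatq$. Step two: use that $f$ is Schur-convex, so majorization is carried to the usual ordering of the reals, giving $f(\stipp)\ge f(\stipq)$ and $f(\flatp)\ge f(\flatq)$. Step three: invoke the Schur-convex case of Proposition~\ref{prop:smoothed_schur}, namely $\stipf(p)=f(\stipp)$, $\stipf(q)=f(\stipq)$, $\flatf(p)=f(\flatp)$, $\flatf(q)=f(\flatq)$, and substitute to conclude $\stipf(p)\ge\stipf(q)$ and $\flatf(p)\ge\flatf(q)$, which is the assertion of the corollary. (The displayed inequalities in the statement read $\stipf(p)\ge\stipf(p)$ and $\flatf(p)\ge\flatf(p)$; these are evidently typos for $\stipf(p)\ge\stipf(q)$ and $\flatf(p)\ge\flatf(q)$.)

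If one also wanted the Schur-concave version, the same three steps apply after replacing the first pair of identities in Proposition~\ref{prop:smoothed_schur} by its second pair, in which $\stipp$ and $\flatp$ trade places; the reversal of the function inequality coming from Schur-concavity is then cancelled by the simultaneous relabelling of the approximations, so monotonicity again holds in the stated direction. I do not expect any genuine obstacle: the whole combinatorial difficulty --- that flattening and steepening preserve majorization order --- is already discharged in Lemma~\ref{lem:presorder}, and this corollary is merely its repackaging in terms of smoothed Schur-convex functions. The one thing worth double-checking is precisely that Lemma~\ref{lem:presorder} delivers $\stipp\succ\stipq$ rather than only weak majorization, which it does because the steepening and flattening maps preserve normalisation.
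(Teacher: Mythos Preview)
Your argument is correct and is exactly the one the paper intends: it states the corollary as an immediate consequence of Lemma~\ref{lem:presorder} and Proposition~\ref{prop:smoothed_schur}, which is precisely the chain you spell out (including your observation that the displayed inequalities should compare $p$ with $q$).
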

\par
We close this section with a result about the minimum distance $\delta$ which allows the
$\delta$-approximation of $p$ to majorize $q$ (and the $\delta$-approximation of $q$ to
be majorized by $p$), when $p \not\succ q$.
\def\minimumdeltaprop
{
Consider two probability distributions of $k$ elements, $p$ and $q$, such that $p \not\succ q$.
Let $\delta_1$ be the minimal $\delta$ such that $\stipp\succ q$, and $\delta_2$ the minimal
$\delta$ such that $p \succ \flatq$. Then we have
\be
\delta_1 = \delta_2= \delta^* \equiv
2 \max_{l \in \left\{1, \ldots, k \right\}} \sum_{i=1}^l \left( q_i - p_i \right).
\ee
}
\begin{proposition}
\minimumdeltaprop
\label{prop:mindelta}
\end{proposition}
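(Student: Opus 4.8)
The plan is to rephrase both claims in terms of Lorenz curves. Write $L_a(l)=\sum_{i=1}^{l}a_i^{\downarrow}$ for the Lorenz curve of a distribution $a$, so that $a\succ b$ holds iff $L_a(l)\ge L_b(l)$ for all $l\in\{1,\dots,k\}$; since $p\not\succ q$ there is an index with $L_p(l)<L_q(l)$, so $\delta^{*}>0$ and $\delta^{*}$ is a genuine distance. By the Remark, $\stipp$ and $\flatq$ are themselves non-increasingly ordered, so their Lorenz curves are the ones displayed above. From the formula for $L_{\stipp}$ one reads off $L_{\stipp}(l)=\min\{L_p(l)+\tfrac{\delta}{2},\,1\}$. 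Because $L_q(l)\le 1$ for every $l$, the condition $\stipp\succ q$ is then equivalent to $L_p(l)+\tfrac{\delta}{2}\ge L_q(l)$ for all $l$, i.e.\ to $\tfrac{\delta}{2}\ge\max_{l}\bigl(L_q(l)-L_p(l)\bigr)=\tfrac{\delta^{*}}{2}$. Hence $\{\delta:\stipp\succ q\}=[\delta^{*},\infty)$, and $\delta_1=\delta^{*}$.

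For $\delta_2$ I show that $p\succ\flatq$ holds iff $\delta\ge\delta^{*}$, which likewise gives $\delta_2=\delta^{*}$. For the direction ``$\Rightarrow$'' I use the elementary fact that if $(a_i)_{i=1}^{k}$ satisfies $\sum_i a_i=0$ then $\sum_i|a_i|\ge 2\,\bigl|\sum_{i=1}^{l}a_i\bigr|$ for every $l$ (bound $\bigl|\sum_{i\le l}a_i\bigr|$ once by $\sum_{i\le l}|a_i|$ and, via $\sum_{i\le l}a_i=-\sum_{i>l}a_i$, once by $\sum_{i>l}|a_i|$, then add). Assume $p\succ\flatq$ and set $q'=\flatq$. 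By construction the flattest approximation changes $q$ only on $I$ and on $J$, moving weight $\ep(x^{*})+\gamma(y^{*})=\tfrac{\delta}{2}+\tfrac{\delta}{2}=\delta$ in total (or less, in the degenerate branch $\flatq=\eta$), so $\|q-q'\|\le\delta$; and $q'$ is non-increasing, so the fact applied to $a=q-q'$ gives $\delta\ge\|q-q'\|\ge 2\bigl(L_q(l)-L_{q'}(l)\bigr)$ for all $l$. Since $p\succ q'$ we have $L_{q'}(l)\le L_p(l)$, hence $\delta\ge 2\bigl(L_q(l)-L_p(l)\bigr)$ for all $l$, and maximising over $l$ yields $\delta\ge\delta^{*}$.

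For the converse I must show $p\succ\flatqstar$. If $\|q-\eta\|\le\delta^{*}$ this is immediate: then $\flatqstar=\eta$, and the uniform distribution is majorized by everything. Otherwise $\flatqstar$ is given by the displayed formula with $I=\{1,\dots,l_I\}$, $J=\{l_J,\dots,k\}$ and $l_I<l_J$, and I verify $L_p(l)\ge L_{\flatqstar}(l)$ in each of the three pieces. For $l\notin I\cup J$ one has $L_{\flatqstar}(l)=L_q(l)-\tfrac{\delta^{*}}{2}\le L_p(l)$, directly from the definition of $\delta^{*}$. For $l\in I$, $L_{\flatqstar}(l)=l\,x^{*}$; the defining equation $\ep(x^{*})=\tfrac{\delta^{*}}{2}$ reads $L_q(l_I)-l_I x^{*}=\tfrac{\delta^{*}}{2}$, so $l_I x^{*}=L_q(l_I)-\tfrac{\delta^{*}}{2}\le L_p(l_I)$, and concavity of $L_p$ with $L_p(0)=0$ gives $L_p(l)\ge\tfrac{l}{l_I}L_p(l_I)\ge l\,x^{*}$. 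For $l\in J$, $L_{\flatqstar}(l)=1-(k-l)y^{*}$, an affine function of $l$ taking the value $1$ at $l=k$ and, by the defining equation $\gamma(y^{*})=\tfrac{\delta^{*}}{2}$, the value $L_q(l_J-1)-\tfrac{\delta^{*}}{2}\le L_p(l_J-1)$ at $l=l_J-1$; since $L_p$ is concave with $L_p(k)=1$, its graph lies above the chord joining $(l_J-1,L_p(l_J-1))$ and $(k,1)$, hence above the lower chord joining $(l_J-1,\,L_q(l_J-1)-\tfrac{\delta^{*}}{2})$ and $(k,1)$, which is exactly $l\mapsto 1-(k-l)y^{*}$ on $[l_J-1,k]\supseteq J$. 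This proves $p\succ\flatqstar$, so $\delta_2\le\delta^{*}$, and with the previous paragraph $\delta_1=\delta_2=\delta^{*}$. The one genuinely delicate step is this last one: one has to isolate the degenerate case $\flatqstar=\eta$, respect the three-piece structure of the Lorenz curve of the flattest approximation, and tame its two affine pieces using only concavity of $L_p$ anchored at the right heights --- which is precisely where the defining equations for $x^{*}$ and $y^{*}$ enter, identifying those anchor heights with $L_q(l_I)-\tfrac{\delta^{*}}{2}$ and $L_q(l_J-1)-\tfrac{\delta^{*}}{2}$.
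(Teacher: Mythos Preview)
Your argument is correct, and in overall architecture it matches the paper's proof: for $\delta_1$ you use the explicit Lorenz curve $L_{\stipp}(l)=\min\{L_p(l)+\tfrac{\delta}{2},1\}$ exactly as the paper does (only more compactly), and for the sufficiency $p\succ\flatqstar$ you treat the same three ranges $I$, the middle, and $J$. Two technical choices differ from the paper and are worth noting. First, on the pieces $I$ and $J$ the paper invokes the principle ``any distribution with at least as much mass majorizes the flat one,'' whereas you argue directly via concavity of $L_p$ and chord comparisons anchored at $(0,0)$ and $(k,1)$; these are of course the same fact, but your geometric phrasing is cleaner and avoids re-deriving the auxiliary ``flat is minimal'' observation. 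Second, for the minimality of $\delta_2$ the paper picks an index $\bar l$ attaining $\tfrac{\bar\delta}{2}<L_q(\bar l)-L_p(\bar l)$ and shows the majorization inequality fails there, while you prove the contrapositive using the general inequality $\sum_i|a_i|\ge 2\bigl|\sum_{i\le l}a_i\bigr|$ for mean-zero $a$; your route is slightly more abstract but has the advantage of not touching the internal structure of $\flatq$ at all, using only $\|q-\flatq\|\le\delta$ and the hypothesis $p\succ\flatq$. Both approaches are equally short; the paper's is more hands-on, yours more structural.
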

The above proposition provides a measure of how much a distribution $p$ majorizes a distribution $q$,
in the sense that it tells us how much we have to distort $q$ in order for $p$ to majorize it, or equally
how much we have to distort $p$ in order for $q$ to be majorized by it. By proving the statement of
Proposition \ref{prop:mindelta} we have shown that this measure is given by $\delta^*$, which is
the minimal distance that allows the \stippest\ approximation of $p$ to majorize $q$, and the
\flattest\ approximation of $q$ to be majorized by $p$. Other measures of majorization distance
include the mixing character/distance~\cite{ruch1976principle,ruch1978mixing}, the information/work
distance~\cite{HO-limitations,faist_minimal_2015, brandao_second_2015}, and the maximum probability
of transition~\cite{alhambra_fluctuating_2016}.
\appendix
\section{Steepest and flattest approximations}
\begin{replem}{lem:flatstiplemma}
\label{replem:flatstiplemma}
\flatstiplemma
\end{replem}
\begin{proof}
Let $p$ be in non-increasing order. Let $\tilde p$ be an arbitrary $\delta$-approximation
of $p$, satisfying $\| \tilde p - p \| \leq \delta$. 
Then we can obtain $\tilde p$ as $\tilde p_i = p_i + \delta_i$, where $\sum_{i=1}^k
|\delta_i| \leq \delta$. Notice that the obtained probability distribution might be
not ordered, and therefore we define $\tilde p^{\downarrow}$ as the
non-increasingly ordered probability distribution obtained from $\tilde p$.
Also, notice that for all $m, l = 1 , \ldots , k$, with $m \leq l$, we have that
$\sum_{i=m}^l \delta_i \in \left[ - \frac{\delta}{2}, \frac{\delta}{2} \right]$.
We can now prove the Lemma.
\par
{\it Proof of part (i)}.
We will exploit the distribution $r$, see Eq.~\eqref{eq:nnormr}, used in the definition
of $\stipp$ (which is equal to $p$ with the largest element increased by $\frac{\delta}{2}$,
before the tail is cut by the same amount). Clearly, for any $l$ 
\be
\sum_{i=1}^l \tilde p^{\downarrow}_i \leq \sum_{i=1}^l p_i + \frac{\delta}{2} = \sum_{i=1}^l r_i.
\ee
Now, the procedure of cutting the tail only affects sums that are larger than $1$, and makes
them to be equal to $1$. 
Since in $\tilde p$ all sums are no greater than $1$, this does not affect the majorization conditions. 
Thus, we find that $\stipp$ majorizes $\tilde p$.
\par
{\it Proof of part (ii).}
Note that, over the interval $I$, the elements of $\flatp$ are all equal, see Eq.~\eqref{eq:flatp}.
The same is true for elements of $\flatp$ with indices in $J$. Those that are neither in $I$ nor
in $J$  are the same as in the original distribution $p$.  Since $p$ is in non-increasing order,
we have that $I=\left\{1,\ldots, l_I\right\}$, and $J=\left\{l_J,\ldots, k\right\}$.  Let us first consider sums up to $l$
elements for $l \leq l_I$. As we noticed, over the interval $I$ the distribution $\flatp$ is flat,
and its norm is equal to $\sum_{i=1}^{l_I} p_i - \frac{\delta}{2}$ due to the definition of $\ep(x^*)$,
see Eq.~\eqref{eq:epsx}. Let us now consider $\tilde p$, and its non-increasingly ordered version
$\tilde p^{\downarrow}$. Then it is clear that $\sum_{i=1}^{l_I} \tilde p^{\downarrow}_i \geq
\sum_{i=1}^{l_I} \tilde p_i \geq \sum_{i=1}^{l_I} p_i - \frac{\delta}{2}$, since the most we can
diminish the first $l_I$ largest elements of $p$ is by subtracting all $\frac{\delta}{2}$ from them. 
Therefore, we have two distributions over $I$, one is flat, and the second has larger total sum. 
Since all distributions majorize the flat one, we get that for all $l \leq  l_I$,
\be
\sum_{i=1}^l \tilde p^{\downarrow}_i \geq \sum_{i=1}^l \flatp_i.
\label{lem:majorpflatp}
\ee
For $l \not \in I \cup J$ we have that, according to its definition, $\flatp_i = p_i$, and therefore
the conditions of Eq.~\eqref{lem:majorpflatp} are still satisfied. To deal with the set of indices $J$, we
rewrite the related majorization inequalities which we need to prove as 
\be
\label{eq:flatpmajorinv}
\sum_{i=l}^k \flatp_i \geq \sum_{i=l}^k  \tilde p^{\downarrow}_i, \quad \forall \, l > l_J.
\ee
As a first step, we consider these sums for $l = l_J$. We have that,
\be
\sum_{i=l_J}^k  \tilde p^{\downarrow}_i \leq \sum_{i=l_J}^k  \tilde p_i = \sum_{i=l_J}^k  \left( p_i +\delta_i \right)
\leq \sum_{i=l_J}^k  p_i + \frac{\delta}{2} = \sum_{i=l_J}^k  \flatp_i,
\ee
where the last equality follows from the definitions of $\flatp$ and $\gamma(y^*)$, see Eq.~\eqref{eq:gammax}.
To prove Eq.~\eqref{eq:flatpmajorinv}, note that for $l \geq l_J$ we have  $\flatp_i = y^*$, where
$y^*$ is some positive number defined in the course of the construction of $\flatp$, see Eq.~\eqref{eq:flatp}.
For now, the value of $y^*$ is not important, and what we need is that all $\flatp_i$ are constant for
$i\in J$. Then, we have to prove that 
\be
\sum_{i=l}^k \tilde p^{\downarrow}_i \leq (k + 1 - l) \, y^*,
\ee
for $l > l_J$. But this is a consequence of the following easy-to-prove observation.
Consider non-negative numbers $\{a_i\}_{i=1}^n$ put in increasing order. Let
$\sum_{i=1}^n  a_i \leq n \, \lambda$, where $\lambda \geq 0 $ is some constant.
Then $\sum_{i=1}^l a_i\leq l \, \lambda$. This observation ends the proof.
\end{proof}
\begin{replem}{lem:presorder}
\label{replem:preservingorderlemma}
\preservingorderlemma
\end{replem}
\begin{proof}
Let us first prove that the {\it \stippest\ $\delta$-approximation preserves the majorization order.}
Let $l^*$ and $m^*$ be the indices defined as in Eq.~\eqref{eq:deflstar} for, respectively, $\stipp$
and $\stipq$. Note that $l^*\leq m^*$. Indeed, from the construction of $\stipp$ it follows that $l^*$ 
is the largest $l$ such that $\sum_{i=1}^{l} p_i + \frac{\delta}{2} \leq 1$. Then, from the fact that
$p \succ q$, we get that $l^* \leq m^*$. 
Now, for $l \leq l^*$ we have 
\be
\sum_{i=1}^l \stipp_i = \sum_{i=1}^l p_i+ \frac{\delta}{2}, \quad \sum_{i=1}^l \stipq_i=
\sum_{i=1}^l q_i+ \frac{\delta}{2},
\ee
hence $p\succ q$ implies
\be
\sum_{i=1}^l \stipp_i \geq \sum_{i=1}^l \stipq_i, \quad \forall \, l \leq l^*.
\ee
For $l > l^*$, instead, we have $\sum_{i=1}^l \stipp_i = 1$, hence the rest of the majorization conditions
is automatically satisfied. 
\par
Now we will prove that the {\it \flattest\ approximation preserves the majorization order.}
Following the definition of Eq.~\eqref{eq:flatp}, let us denote $x = x^*(p)$, $x' = x^*(q)$
and $y = y^*(p)$, $y' = y^*(q)$, where $x^*$, $y^*$ are defined in the course of constructing
the \flattest\ approximation; $x^*$ is the level at which the first largest elements are cut, 
and $y^*$ is the level to which the smallest elements are enlarged. 
Similarly, let us denote $l_I \equiv l_I(p)$, $l_I' \equiv l_I(q)$ and $l_J \equiv l_J(p)$, $l_J' \equiv l_J(q)$.
Recall that the interval $I = \left\{1,\ldots, l_I \right\}$ labels the elements that are cut (and
have become equal to $x^*$), while the interval $J = \left\{ l_J, \ldots, k \right\}$ labels the elements
that are enlarged (and have become equal to $y^*$).
In the following, we will frequently use the result of Lemma \ref{leqm:xy}, that $x \geq x'$, and
$y \leq y'$. In fact, these inequalities are necessary conditions for majorization (indeed, $x$
and $x'$ are the largest elements, while $y$ and $y'$ are the smallest elements of $\flatp$ and $\flatq$,
respectively).
\par
We will divide the range $l = 1, \ldots, k$ into five intervals; (i) $\left[1, l_I \right]$,
(ii) $\left[ l_I+1, l_I' \right]$, (iii) $\left[ l_J, k \right]$, (iv) $\left[ l_J', l_J-1 \right]$, and
(v) $\left( \max \left\{ l_I, l_I' \right\}, \min \left\{ l_J, l_J' \right\} \right)$. Notice that the
intervals (ii) and (iv) may be empty. For each interval we will prove that
\be
\sum_{i=1}^{l} \flatp_i \geq \sum_{i=1}^{l} \flatq_i,
\label{eq:major_approx}
\ee 
for $l$ belonging to the specific interval.
\bei
\item[(i)] $[1,l_I]$: This case is immediate. For all $ i \leq l_I$, and independently of whether
$l_I > l_I'$ or vice versa, we have
\be
\flatp_i = x \geq x' \geq \flatq_i,
\ee
where the first inequality follows from Lemma \ref{leqm:xy}, and the second one from the
fact that $x'$ is largest element of $\flatq$. Notice that the second inequality is saturated
for all $i \leq  l_I'$. Summing up we obtain Eq.~\eqref{eq:major_approx} for $l\leq l_I$. 
\item[(ii)]$[l_I+1, l_I']$: This case is trivial if the set is empty. When the set is not empty,
instead, we start by considering the case of $l = l_I'$. In this situation we have
\be
\sum_{i=1}^{l_I'} \flatq_i= \sum_{i=1}^{l_I'} q_i - \frac{\delta}{2}, \quad 
\sum_{i=1}^{l_I'} \flatp_i \geq \sum_{i =1}^{l_I'} p_i - \frac{\delta}{2},
\ee 
which follows from the definition of $\ep(x)$ and $\ep(x')$, see Eq.~\eqref{eq:epsx}, and the fact
that $l_I \leq l_I'$. Thus, due to the fact that $p\succ q$, we obtain 
\be
\sum_{i=1}^{l_I'} \flatp_i  \geq \sum_{i=1}^{l_I'} \flatq_i.
\ee
Then, since $\flatp$ has no smaller norm than $\flatq$ on this interval, and moreover $\flatq$
is flat on the interval, we have that $\flatp$ (as well as any other distribution with no smaller
norm) majorizes $\flatq$ on the interval. This proves Eq.~\eqref{eq:major_approx} for this
interval.
\item[(iii)] $[l_J,k]$: 
In this interval we will prove equivalent relation to the one of Eq.~\eqref{eq:major_approx},
namely
\be
\sum_{i=l}^k \flatp_i\leq \sum_{i=l}^k \flatq_i,
\label{eq:major_approx2}
\ee
for $l > l_J$.  For all $i \geq l_J$, and independently of whether $l_J < l_J'$ or vice versa,
we have 
\be
\flatp_i = y \leq y' \leq \flatq_i,
\ee
where the first inequality follows from Lemma \ref{leqm:xy}, and the second one from the
fact that $y'$ is the smallest element of $\flatq$. Summing up, we obtain Eq.~\eqref{eq:major_approx2}
for $l \geq l_J$. 
\item[(iv)] $[l_J',l_J-1]$: This case is trivial if the set is empty. When the set is not empty,
instead, we start by considering the case of $l = l_J'$. We have that
\be
\sum_{i=l_J'}^{k} \flatq_i= \sum_{i=l_J'}^{k} q_i + \frac{\delta}{2}, \quad 
\sum_{i=l_J'}^{k} \flatp_i \leq \sum_{i =l_J'}^{k} p_i + \frac{\delta}{2},
\ee
which follows from the definition of $\gamma(y)$ and $\gamma(y')$, see Eq.~\eqref{eq:gammax},
and the fact that $l_J \geq l_J'$. Therefore, by $p \succ q$ we obtain that
\be
\sum_{i=l_J'}^{k} \flatp_i \leq \sum_{i=l_J'}^{k} \flatq_i.
\ee
Thus, on this interval $\flatq$ has no smaller norm than $\flatp$, and moreover $\flatq$ is flat. 
If the norms were equal to each other, $\flatp$ would majorize $\flatq$ on the interval,
since any distribution majorizes the flat distribution. Therefore the conditions 
\be
\sum_{i=l}^k \flatp_i \leq \sum_{i=l}^k \flatq_i,
\ee
would be satisfied for $l > l_J'$.  Since norm of $\flatq$ may only be larger, the above inequalities still hold. 
\item[(v)] $\max\{l_I,l_I'\} < l < \min\{l_J,l_J'\}$: Note that for $l$ in such interval we have 
\be
\sum_{i=1}^l \flatp_i =  \sum_{i=1}^l p_i - \frac{\delta}{2},
\quad \sum_{i=1}^l \flatq_i = \sum_{i=1}^l q_i - \frac{\delta}{2},
\ee
which follows from the definition of $\ep(x)$ and $\ep(x')$, see Eq.~\eqref{eq:epsx}.
So, by $p \succ q$ we obtain for the considered interval
\be
\sum_{i=1}^l \flatp_i \geq \sum_{i=1}^l \flatq_i.
\ee
This concludes the proof of the majorization relations for all $l$. 
\eei
\end{proof}
\begin{lemma}
\label{leqm:xy}
Let us consider two probability distributions of $k$ elements, $p$ and $q$, where
$p \succ q$. We denote $x= x^*(p)$, $x'=x^*(q)$ and $y= y^*(p)$, $y'=y^*(q)$, 
where $x^*$, $y^*$ are defined in the course of constructing the \flattest\ approximation;
$x^*$ is the level at which first largest elements are cut, and $y^*$ is the level to which the
smallest elements are enlarged. Then 
\be
x \geq x',\quad y \leq y'.
\ee
\end{lemma}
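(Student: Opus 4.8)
The plan is to push everything through the two functions $\ep$ and $\gamma$ that appear in the construction of $\flatp$, exploiting that each is monotone in two separate ways: monotone in its threshold argument for a fixed distribution, and monotone under majorization for a fixed threshold. Assume $p$ is in non-increasing order. The key step is the ``max representation'' $\ep_p(x)=\max_{0\le m\le k}\big(\sum_{i=1}^{m} p_i-mx\big)$ and $\gamma_p(y)=\max_{0\le m\le k}\big(my-1+\sum_{i=1}^{k-m} p_i\big)$, valid because enlarging $m$ beyond the number of indices with $p_i\ge x$ (resp.\ $p_i\le y$) only adds non-positive increments, while shrinking it only removes non-negative ones. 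From these formulas one reads off that $\ep_p(\cdot)$ is continuous, piecewise linear and non-increasing, while $\gamma_p(\cdot)$ is continuous, piecewise linear and non-decreasing; this is exactly what makes the levels $x^*(p),y^*(p)$ solving $\ep_p(x^*)=\gamma_p(y^*)=\tfrac{\delta}{2}$ well defined and unique in the relevant regime $\|p-\eta\|>\delta$ (and likewise for $q$).

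The next step is to feed in the hypothesis $p\succ q$. By definition this gives $\sum_{i=1}^{m} p_i\ge\sum_{i=1}^{m} q_i$ for every $m\in\{1,\ldots,k\}$, and since both vectors are normalised the same inequality holds for every partial sum $\sum_{i=1}^{k-m}$. Plugging these into the two max representations term by term inside the maxima yields at once $\ep_p(x)\ge\ep_q(x)$ and $\gamma_p(y)\ge\gamma_q(y)$ for every $x,y\in[0,1]$. (Equivalently: $t\mapsto(t-x)^+$ and $t\mapsto(y-t)^+$ are convex, so $\sum_i(p_i-x)^+\ge\sum_i(q_i-x)^+$ and $\sum_i(y-p_i)^+\ge\sum_i(y-q_i)^+$ are instances of the Hardy--Littlewood--P\'olya characterisation of majorization by convex functions.)

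Finally, combine the two monotonicities. Writing $x'=x^*(q)$ we get $\ep_p(x')\ge\ep_q(x')=\tfrac{\delta}{2}=\ep_p(x)$; since $\ep_p$ is non-increasing this forces $x'\le x$, i.e.\ $x\ge x'$. Symmetrically, writing $y'=y^*(q)$ we get $\gamma_p(y')\ge\gamma_q(y')=\tfrac{\delta}{2}=\gamma_p(y)$; since $\gamma_p$ is non-decreasing this forces $y'\ge y$, i.e.\ $y\le y'$. This is the claim.

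I expect the only genuinely delicate point to be the justification of the max representations (equivalently, the Schur-convexity of $\ep_p$ and $\gamma_p$): it is tempting but wrong to compare a single partial sum of $p$ directly with the corresponding partial sum of $q$, since the index $m$ realising the threshold for $p$ need not realise it for $q$; taking the maximum over all $m$ is precisely what repairs this. Everything after that is routine, and the argument for the $y$-part is the mirror image of the one for the $x$-part under the substitution $t\mapsto -t$, which swaps $\ep$ with $\gamma$ and preserves the majorization order, so the two halves run in parallel.
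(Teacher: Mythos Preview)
Your argument is correct. The max representations $\ep_p(x)=\max_{m}\big(\sum_{i\le m}p_i-mx\big)$ and $\gamma_p(y)=\max_{m}\big(my-1+\sum_{i\le k-m}p_i\big)$ are valid for the reason you give (the increments $p_m-x$ change sign exactly once), and from them both monotonicities --- in the threshold and under majorization --- follow termwise. The final step, inferring $x'\le x$ from $\ep_p(x')\ge\ep_p(x)$, is safe because $\ep_p$ is strictly decreasing on the region where it is positive, and $\ep_p(x)=\tfrac{\delta}{2}>0$.

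The paper's proof is essentially the same inequality unpacked by hand. Its key estimate $\sum_{i=1}^{l}p_i\le l\,x+\tfrac{\delta}{2}$ for all $l$ is nothing but your max representation rewritten as $\sum_{i\le l}p_i-l x\le\ep_p(x)$. Where you then invoke two functional monotonicities of $\ep_p$, the paper instead evaluates that estimate at the single index $l=l_I'$ and chains it with majorization and the identity $\sum_{i\le l_I'}q_i=l_I'\,x'+\tfrac{\delta}{2}$ to read off $x\ge x'$ directly. Your packaging via Schur-convexity of $t\mapsto\sum_i(p_i-t)^+$ is cleaner and makes the symmetry between the $x$- and $y$-parts transparent; the paper's version is more elementary but requires separate case splits $l\le l_I$ versus $l>l_I$ (and analogously for $J$). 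Both routes rest on the same underlying inequality, so neither is stronger, but yours generalises more readily (e.g.\ to other convex test functions of the entries).
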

\begin{proof}
Let us first denote $l_I \equiv l_I(p)$, $l_I' \equiv l_I(q)$, and $l_J \equiv l_J(p)$, $l_J' \equiv l_J(q)$.
Recall that the interval $I = \left\{1,\ldots, l_I \right\}$ labels the elements that are cut (and
have become equal to $x^*$), while the interval $J = \left\{ l_J, \ldots, k \right\}$ labels the elements
that are enlarged (and have become equal to $y^*$).
\par
To prove that $x \geq x'$, notice first that for any $l=1,\ldots, k$ we have 
\be
\sum_{i=1}^l p_i \leq l \, x + \frac{\delta}{2},
\label{eq:p_vs_x}
\ee
Indeed, for $l\leq l_I$ we have 
\be
\sum_{i=1}^l p_i= \sum_{i=1}^{l} (x + \delta_i) \leq  l \, x + \frac{\delta}{2},
\ee
where $\delta_i\equiv p_i - \flatp_i$ satisfy $\sum_{i=1}^{l_I}\delta_i=\frac{\delta}{2}$ 
and $\sum_{i=1}^l \delta_i \geq 0$ for any $l$, which follows from the construction of $\flatp$. 
For $l > l_I$, instead, we have
\be
\sum_{i=1}^l p_i = \sum_{i=1}^{l_I} p_i + \sum_{i=l_I + 1}^l p_i = l_I \, x + \frac{\delta}{2} + \sum_{i=l_I+1}^l p_i
\leq  l_I \, x + \frac{\delta}{2} + (l - l_I) \, x = l \, x + \frac{\delta}{2},
\ee
where the inequality follows from the definition of the interval $I$, see Eq.~\eqref{eq:setI}.
Now we use Eq.~\eqref{eq:p_vs_x} for $l = l_I'$, in conjunction with the majorization condition
$p \succ q$, to get $x \geq x'$. We write
\be
l_I' \, x + \frac{\delta}{2} \geq \sum_{i=1}^{l_I'} p_i \geq \sum_{i=1}^{l_I'} q_i = l_I' \, x' + \frac{\delta}{2},
\ee
which implies $x \geq x'$, since  $l_I'\geq 1$ by definition. 
\par
The relation $y \leq y'$ is proved in an analogous way. First, for any $l = 1,\ldots,k$ we have 
\be
\sum_{i=l}^k p_i \geq (k-l+1) \, y - \frac{\delta}{2}.
\label{eq:q_vs_y}
\ee
Indeed, for $l \geq l_J$ we have 
\be
\sum_{i=l}^k p_i = \sum_{i=l}^k \left( y - \ep_i \right) \geq (k-l+1) \, y - \frac{\delta}{2},
\ee
where $\ep_i\equiv \flatp_i -p_i$ satisfy $\sum_{i=l_J}^k\ep_i=\frac{\delta}{2}$ and $\sum_{i=l}^k \ep_i\geq 0$ 
for any $l$, which follows from the construction of $\flatp$. 
For $l<l_J$, instead, we have 
\be
\sum_{i=l}^{k} p_i = \sum_{i=l}^{l_J-1} p_i + \sum_{i=l_J}^{k} p_i =
\sum_{i=l}^{l_J-1} p_i + (k- l_J+1) \, y - \frac{\delta}{2} \geq (k-l+1) \, y - \frac{\delta}{2},
\ee
where the inequality follows from the definition of the interval $J$, see Eq.~\eqref{eq:setJ}.
Now we use Eq.~\eqref{eq:q_vs_y} for $l = l_J'$, in conjunction with the majorization
condition $p \succ q$, to show that $y \leq y'$. We write
\be
(k-l_J'+1) \, y  - \frac{\delta}{2} \leq \sum_{i=l_J'}^k p_i  \leq \sum_{i=l_J'}^k q_i = (k-l_J'+1) \, y'  - \frac{\delta}{2},
\ee
which implies $y\leq y'$.
\end{proof}
\begin{repprop}{prop:mindelta}
\label{repprop:mindelta}
\minimumdeltaprop
\end{repprop}
\begin{proof}
Let us begin by showing that $\delta^*$ is the minimum distance $\delta$ such that $\stipp \succ q$.
As a first step, we want to show that $\stippstar$ majorizes $q$. To this aim, consider the non-normalised
distribution $r$ obtained from $p$ by adding $\frac{\delta^*}{2}$ to its first element, Eq.~\eqref{eq:nnormr}.
Then, for all $l \leq l^*$, we have
\be
\sum_{i=1}^{l} \stippstar_i = \sum_{i=1}^{l} r_i = \sum_{i=1}^{l} p_i + \frac{\delta^*}{2} \geq
\sum_{i=1}^{l} p_i + \left( \sum_{i=1}^{l} \left( q_i - p_i \right) \right) = \sum_{i=1}^{l} q_i,
\ee
where the inequality follows from the definition of $\delta^*$. When $l > l^*$, instead, we
have that $\sum_{i=1}^{l} \stippstar_i = 1$, and due to the normalisation condition on $q$ we
have that $\sum_{i=1}^{l} \stippstar_i \geq \sum_{i=1}^{l} q_i$. Then, $\stippstar \succ q$.
\par
To show that $\delta^*$ is minimum, we consider $\bar{\delta} < \delta^*$, and we show that
$\stippbar \not \succ q$. In this case, it exists an $\bar{l}$ such that
\be
\label{eq:deltabar}
\frac{\bar{\delta}}{2} < \sum_{i=1}^{\bar{l}} \left( q_i - p_i \right).
\ee
Then,
\be
\sum_{i=1}^{\bar{l}} \stippbar_i \leq \sum_{i=1}^{\bar{l}} p_i + \frac{\bar{\delta}}{2} <
\sum_{i=1}^{\bar{l}} p_i + \sum_{i=1}^{\bar{l}} \left( q_i - p_i \right) =  \sum_{i=1}^{\bar{l}} q_i,
\ee
where the first inequality is saturated for $\bar{l} \leq l^*$, and the second inequality follows
from Eq.~\eqref{eq:deltabar}. Thus, we have that $\stippbar \not \succ q$ for all $\bar{\delta}
< \delta^*$.
\par
Now, we show that $\delta^*$ is the distance $\delta$ such that $p \succ \flatq$.
In particular, we initially want to show that $p \succ \flatqstar$. As a first step, we consider
the interval $I = \left\{1, \ldots , l_I \right\}$ in which $\flatqstar$ is flat, and all its elements are
equal to $x^*$. In particular, we have that
\be
\sum_{i=1}^{l_I} \flatqstar_i = \sum_{i=1}^{l_I} x^*_i = \sum_{i=1}^{l_I} q_i - \frac{\delta^*}{2}
\leq \sum_{i=1}^{l_I} q_i - \left( \sum_{i=1}^{l_I} \left( q_i - p_i \right) \right) = \sum_{i=1}^{l_I} p_i,
\ee
where the second equality directly follows from Eq.~\eqref{eq:epsx} and from the fact that
$\ep(x^*) = \frac{\delta^*}{2}$, while the inequality follows from the definition of $\delta^*$.
The above equation proves that, on the interval $I$, the norm of $\flatqstar$ is smaller or
equal to the one of $p$. Then, since $\flatqstar$ is flat over the interval $I$, we have that
$p$ majorizes it, that is,
\be
\label{eq:flatpstar1}
\sum_{i=1}^l p_i \geq \sum_{i=1}^l \flatqstar_i, \quad \forall \, l \leq l_I.
\ee
We can now consider the interval in between $I$ and $J$, where $J = \left\{ l_J , \ldots, k \right\}$.
For all $l$ in this interval, $l_I < l < l_J$, we have
\be
\sum_{i=1}^{l} \flatqstar_i = \sum_{i=1}^{l_I} x^* + \sum_{i=l_I+1}^{l} q_i = \sum_{i=1}^{l} q_i - \frac{\delta^*}{2}
\leq \sum_{i=1}^{l} q_i - \left( \sum_{i=1}^{l} \left( q_i - p_i \right) \right) = \sum_{i=1}^{l} p_i,
\ee
which, again, follows from the definition of $\ep(x^*)$ and the one of $\delta^*$. Thus, we find that
\be
\label{eq:flatpstar2}
\sum_{i=1}^l p_i \geq \sum_{i=1}^l \flatqstar_i, \quad \forall \, l \in \left\{ l_I + 1 , \ldots, l_J - 1 \right\}.
\ee
Finally, we consider the interval $J$. In this case, we will prove that
\be
\label{eq:flatpstar3}
\sum_{i=l}^k p_i \leq \sum_{i=l}^k \flatqstar_i, \quad \forall \, l > l_J.
\ee
To do so, let us consider the case $l = l_J$, where we have
\be
\sum_{i=l_J}^k \flatqstar_i = \sum_{i=l_J}^k y^* = \sum_{i=l_J}^{k} q_i + \frac{\delta^*}{2} \geq
\sum_{i=l_J}^{k} q_i + \left( \sum_{i=l_J}^{k} \left( p_i - q_i \right) \right) = \sum_{i=l_J}^{k} p_i,
\ee
which follows from the definition of $\gamma(y^*)$ and the one of $\delta^*$. Thus, we have that,
over the interval $J$, $\flatqstar$ has bigger norm than $p$. Then, following the same argument used
in the proof of Lemma~\ref{lem:presorder} (iv), we have that since $\flatqstar$ is flat over $J$, then it is majorized by
$p$, which proves Eq.~\eqref{eq:flatpstar3}. Therefore, we have that $p \succ \flatqstar$.
\par
To conclude the proof, we need to show that $\delta^*$ is minimum, that is, for all $\bar{\delta} < \delta^*$,
we have that $p \not \succ \flatpbar$. When $\bar{\delta}$ is considered, we have seen that an $\bar{l}$
exists such that Eq.~\eqref{eq:deltabar} is satisfied. Then, for $l = \bar{l}$,
\be
\sum_{i=1}^{\bar{l}} \flatpbar_i \geq \sum_{i=1}^{\bar{l}} q_i - \frac{\bar{\delta}}{2} >
\sum_{i=1}^{\bar{l}} q_i - \sum_{i=1}^{\bar{l}} \left( q_i - p_i \right) =  \sum_{i=1}^{\bar{l}} p_i,
\ee
where the first inequality is saturated when $l_I \leq \bar{l} < l_J$, and the second one follows
from Eq.~\eqref{eq:deltabar}. Thus, we have that $p \not \succ \flatpbar$ for all $\bar{\delta}
< \delta^*$.
\end{proof}
{\bf Acknowledgements}
We thank Fernando Brand\~ao, Nelly Ng and Stephanie Wehner for discussions.
MH is partially supported by a grant from the John Templeton Foundation.
The opinions expressed in this publication are those of the
authors and do not necessarily reflect the views of the John Templeton Foundation.
JO thanks the Royal Society and an EPSRC Established Career Fellowship for their support.
CS is supported by the EPSRC [grant number EP/L015242/1].
\bibliography{biblio_smooth}

\begin{thebibliography}{10}

\bibitem{book-majorization}
Albert~W. Marshall, Ingram Olkin, and Barry~C. Arnold.
\newblock {\em Inequalities: {Theory} of {Majorization} and {Its}
  {Applications}}.
\newblock Springer {Series} in {Statistics}. Springer New York, New York, NY,
  2011.

\bibitem{nielson-majorisation}
Michael~A. Nielsen and Guifr\'e Vidal.
\newblock Majorization and the {Interconversion} of {Bipartite} {States}.
\newblock {\em Quantum Information and Computation}, 1(1):76--93, 2001.

\bibitem{Hardy-Littlewood-Polya}
G.~H. Hardy, J.~E. Littlewood, and G.~P\'olya.
\newblock {\em Inequalities}.
\newblock Cambridge University Press, 1952.

\bibitem{Renner-PhD}
Renato Renner.
\newblock Security of {Quantum} {Key} {Distribution}.
\newblock {\em arXiv:quant-ph/0512258}, 2005.

\bibitem{deMeer-MSc}
Remco van~der Meer.
\newblock {\em The {Properties} of {Thermodynamical} {Operations}}.
\newblock PhD thesis, Delft University of Technology, 2016.

\bibitem{deMeer2017}
Remco van~der Meer, Nelly Huei~Ying Ng, and Stephanie Wehner.
\newblock Smoothed generalized free energies for thermodynamics.
\newblock {\em arXiv:1706.03193 [quant-ph]}, 2017.

\bibitem{nila-2017}
Eric~P. Hanson and Nilanjana Datta.
\newblock Maximum and minimum entropy states yielding local continuity bounds.
\newblock {\em arXiv:1706.02212 [math-ph, physics:quant-ph]}, 2017.

\bibitem{ruch1976principle}
Ernst Ruch and Alden Mead.
\newblock The principle of increasing mixing character and some of its
  consequences.
\newblock {\em Theoretical Chemistry Accounts: Theory, Computation, and
  Modeling (Theoretica Chimica Acta)}, 41(2):95--117, 1976.

\bibitem{ruch1978mixing}
Ernst Ruch, Rudolf Schranner, and Thomas~H. Seligman.
\newblock The mixing distance.
\newblock {\em Journal of Chemical Physics}, 69(1):386--392, 1978.

\bibitem{HO-limitations}
Micha{\l} Horodecki and Jonathan Oppenheim.
\newblock Fundamental limitations for quantum and nano thermodynamics.
\newblock {\em Nature Communications}, 4(2059), 2013.

\bibitem{faist_minimal_2015}
Philippe Faist, Fr\'ed\'eric Dupuis, Jonathan Oppenheim, and Renato Renner.
\newblock The minimal work cost of information processing.
\newblock {\em Nature Communications}, 6:7669, 2015.

\bibitem{brandao_second_2015}
Fernando Brand\~ao, Micha{\l} Horodecki, Nelly Ng, Jonathan Oppenheim, and
  Stephanie Wehner.
\newblock The second laws of quantum thermodynamics.
\newblock {\em Proceedings of the National Academy of Sciences},
  112(11):3275--3279, 2015.

\bibitem{alhambra_fluctuating_2016}
\'Alvaro~M. Alhambra, Jonathan Oppenheim, and Christopher Perry.
\newblock Fluctuating {States}: {What} is the {Probability} of a
  {Thermodynamical} {Transition}?
\newblock {\em Physical Review X}, 6(4):041016, 2016.

\end{thebibliography}
\end{document}